\documentclass{article}

\usepackage{arxiv}

\usepackage[utf8]{inputenc} 
\usepackage[T1]{fontenc}    
\usepackage{hyperref}       
\usepackage{url}            
\usepackage{booktabs}       
\usepackage{amsfonts}       
\usepackage{nicefrac}       
\usepackage{microtype}      
\usepackage{lipsum}
\usepackage{graphicx}

\usepackage{algorithm}
\usepackage{algorithmic}
\usepackage{appendix}
\usepackage[most]{tcolorbox}
\usepackage{xcolor}
\usepackage{booktabs}
\usepackage{amsmath,amsthm,amssymb}
\usepackage{bbold}
\newtheorem{definition}{Definition}
\newtheorem{theorem}{Theorem}

\usepackage{subfigure}
\usepackage{bbding}
\usepackage{multirow}

%
\usepackage{newfloat}
\usepackage{listings}
\usepackage{threeparttable}

\definecolor{promptbackground}{HTML}{F8F8F8} %
\definecolor{promptborder}{HTML}{DDDDDD}
\definecolor{prompttitle}{HTML}{444444}

\title{SoftPipe: A Soft-Guided Reinforcement Learning Framework for Automated Data Preparation}


\newcommand{\szucsse}{$^{1}$}
\newcommand{\szuai}{$^{2}$}
\newcommand{\osaka}{$^{3}$}
\author{
 Jing Chang \szucsse \\
  \texttt{2350273007@email.szu.edu.cn} \\
   \And
 Chang Liu \szucsse \\
  \texttt{2300271084@email.szu.edu.cn} \\
  \And
 Jinbin Huang \szuai \\
  \texttt{jbhuang@comp.hkbu.edu.hk} \\
  \And
 Shuyuan Zheng \osaka \\
  \texttt{zheng@ist.osaka-u.ac.jp} \\
  \And
 Rui Mao \szucsse \\
  \texttt{mao@szu.edu.cn} \\
  \And
 Jianbin Qin \szucsse \textsuperscript{\Envelope} \\
  \texttt{qinjianbin@szu.edu.cn} \\
  \And \\
    \szucsse College of Computer Science and Software Engineering, Shenzhen University \\ \szuai School of Artificial Intelligence, Shenzhen University \\ \osaka Graduate School of Information Science and Technology, Osaka University
}

\begin{document}
\maketitle

\begin{abstract}
Data preparation is a foundational yet notoriously challenging component of the machine learning lifecycle, characterized by a vast combinatorial search space. While reinforcement learning (RL) offers a promising direction, state-of-the-art methods suffer from a critical limitation: to manage the search space, they rely on rigid ``hard constraints'' that prematurely prune the search space and often preclude optimal solutions. To address this, we introduce SoftPipe, a novel RL framework that replaces these constraints with a flexible ``soft guidance'' paradigm. SoftPipe formulates action selection as a Bayesian inference problem. A high-level strategic prior, generated by a Large Language Model (LLM), probabilistically guides exploration. This prior is combined with empirical estimators from two sources through a collaborative process: a fine-grained quality score from a supervised Learning-to-Rank (LTR) model and a long-term value estimate from the agent's Q-function. Through extensive experiments on 18 diverse datasets, we demonstrate that SoftPipe achieves up to a 13.9\% improvement in pipeline quality and 2.8$\times$ faster convergence compared to existing methods.
\end{abstract}


\section{Introduction}\label{sec:intro}
 \label{sec:introduction}
Data preparation is a foundational, yet notoriously labor-intensive, component of the machine learning (ML) lifecycle~\cite{pecan2024data}. The process of transforming raw, often messy data into a clean, well-structured format suitable for model training is not a single action but a multi-stage workflow encompassing tasks like data cleaning, integration, transformation, feature engineering, and feature selection. The quality of the final ML model is profoundly dependent on the quality of this preparatory pipeline~\cite{challange2025,John2025data}. The task's primary difficulty lies in discovering an optimal sequence of operations from a vast library of potential operators, subject to a severe combinatorial explosion. To illustrate, with merely 24 available operators and pipelines of a modest length of 8, the search space contains over $O(24^8) \approx 1.1 \times 10^{11}$ possible configurations. 

Reinforcement Learning (RL) has emerged as a particularly promising paradigm for this challenge, as its formulation as a sequential decision-making process naturally aligns with pipeline construction~\cite{berti2019learn2clean,heffetz2020deepline,shang2019democratizing,yang2021auto,koka2025cleansurvival,chen2025advancing}.  However, a close examination of existing state-of-the-art methods reveals a critical, unaddressed limitation: to manage the vast search space, they implicitly rely on ``hard constraints'' that rigidly structure the pipeline construction process. These constraints, while simplifying the search, prematurely prune the solution space and often preclude optimal pipelines.


A prime example is CtxPipe~\cite{ctxpipe}, a leading RL-based framework. While claiming flexibility, its implementation imposes a crucial hard constraint: It prevents the selection of multiple operators belonging to the same semantic type. These types, detailed in Table~\ref{tab:components}, span the entire data preparation workflow. This seemingly minor restriction has major consequences. On the wall-robot-nav dataset as Table~\ref{tab:avila-compare} listed, CtxPipe's constrained search achieves an accuracy of 0.946. Similarly, HAI-AI~\cite{chen2023haipipe} and TPOT~\cite{olson2016evaluation}, also fall short, achieving 0.896 and 0.941 respectively. Compared to an approximated optimal solution (ES*) discovers a superior pipeline\textit{-}[QuantileTransformer, StandardScaler, PolynomialFeatures]-reaching 0.962 accuracy. This pipeline's effectiveness hinges on combining two `Feature Preprocessing' operators, a structure inaccessible to CtxPipe's search strategy, thus demonstrating the weakness of such hard constraints~\footnote{All data presented are directly drawn from original paper.}.


\begin{table}[h!]
\centering
\caption{Selected data preparation operators and their types}
\label{tab:components}
\begin{tabular}{|c|l|}
\hline
\textbf{Type} & \textbf{Operator (ID)}                   \\ \hline
Imputer       & Mean(1), Median(2), Most Frequent(3) \\ \hline
Encoder       & Label(4), One-hot(5)                 \\ \hline
\begin{tabular}[c]{@{}c@{}}Feature \\ Preprocessing\end{tabular} &
  \begin{tabular}[c]{@{}l@{}}Min Max Scaler(6), Max Abs Scaler(7),\\ Robust Scaler(8), Standard Scale(9),\\ Quantile Transformer(10), Power \\ Transformer(12), Log Transformer(11), \\ Normalizer(13), K-bins Discretizer(14)\end{tabular} \\ \hline
\begin{tabular}[c]{@{}c@{}}Feature \\ Engineering\end{tabular} &
  \begin{tabular}[c]{@{}l@{}}Polynomial Features(15), Interaction \\Features(16), PCA AUTO(17), \\ PCA LAPACK(18), PCA ARPACK(19),\\ Incremental PCA(20), Kernel PCA(21), \\ Truncated SVD(22),\\ Random Trees Embedding(23)\end{tabular} \\ \hline
Selection     & Variance Threshold(24)                 \\ \hline
\end{tabular}
\end{table}

\begin{table}[htbp]
    \centering
    \caption{\textit{wall-robot-nav} dataset performance Comparison} 
    \begin{tabular}{lcl}
    \toprule
        Approach & Accuracy & Pipelines\tnote{1} \\ \midrule
        CtxPipe & 0.946 & [6, 23] \\
        HAI-AI & 0.896 & [6, 15, 24] \\
        TPOT & 0.941 & [14, 15] \\
        ES\textsuperscript{*} & 0.962 & [10, 9, 15] \\
        \bottomrule
    \end{tabular}
    \begin{tablenotes}
        \footnotesize
        \item[1] The pipelines are represented as sequences of operator id, and (id, operator) mappings are listed in Table~\ref{tab:components}.
    \end{tablenotes}
    \label{tab:avila-compare}
\end{table}

In summary, the existing methods discussed above with hard constraints reveals several limitations that block the search of optimal data preparation pipelines: (1)\textit{ Rigid search space pruning:} Hard constraints restrictions exclude potentially optimal solutions from consideration; (2) \textit{Premature convergence:} By enforcing strict rules early in the search process, these methods may converge to suboptimal local minima without exploring promising alternative pathways.(3) \textit{Limited adaptability:} Hard constriants lack the flexibility to adapt based on changing data characteristics during the search process.

To address these limitations, we introduce SoftPipe, a framework that replaces hard constraints with ``soft guidance''. Our approach formulates action selection as a Bayesian inference problem. Instead of rigid rules, we use a Large Language Model (LLM) to generate a high-level strategic prior, which probabilistically guides the exploration. This prior is then synergistically combined with empirical evidence from a fine-grained Learning-to-Rank (LTR) model and the agent's own long-term value estimates (Q-function). This principled fusion allows SoftPipe to benefit from strategic guidance while maintaining the flexibility to override it when evidence warrants, enabling the discovery of unconventional yet optimal pipelines. In summary, our contributions are: 

\begin{itemize}
    \item We identify and expose a critical limitation in current AutoDP methods: their reliance on rigid ``hard constraints'' that prematurely prune the search space and prevent the discovery of optimal pipelines.
    \item We propose SoftPipe, a novel RL framework that enables flexible exploration via a ``soft guidance'' mechanism grounded in Bayesian inference that facilitates the collaboration of a strategic LLM prior, a fine-grained LTR score, and an empirical RL value function.
    \item We demonstrate significant improvements, achieving up to 13.9\% higher pipeline quality and 2.8× faster convergence compared to state-of-the-art methods across 18 diverse datasets.
\end{itemize}

\section{Related Works}\label{sec:related}

\paragraph{Automated Data Preparation (AutoDP)} AutoDP has emerged as a critical component in AutoML systems aiming to streamline preprocessing workflows. Early approaches like Auto-Sklearn~\cite{10.5555/2969442.2969547} integrated rule-based preprocessings along side hyperparameter optimization, while TPOT~\cite{olson2016evaluation} extended the process by combining genetic algorithms to evolve feature engineering and transformation steps. Interactive systems like Data Civilizer~\cite{rezig2019datacivilizer} and DynaML~\cite{DynaML} provide visual platforms for data curation. Commercial systems like H2O.ai~\cite{h2o} further automate data preparation by incorporating automated feature engineering templates. These cases demostrate the awareness that data preparation strongly impacts tasks in data science.

\paragraph{Reinforcement Learning for AutoDP} Recent studies exploreReinforcement learning (RL) and neural networks (NN) for AutoDP. RL-based models, such as Learn2Clean~\cite{berti2019learn2clean}, model data as a sequential decision process~\cite{zhang2025causalcomrl,yu2018towards,khurana2018feature}. Meanwhile, MetaPrep~\cite{zagatti2021metaprep} combines meta-learning with RL to adaptively select preprocessing steps for new tasks, reducing search space via learned priors. In the context of AutoDP, RL has been applied to specific sub-problems like feature selection~\cite{chen2023haipipe,ctxpipe,heffetz2020deepline}. However, the RL agent may face scalability challenges and lack of inherent structure guidance, causing possible overfitting risks.

\paragraph{LLMs for Code and Planning} Recently, Large Language Models (LLMs) have shown remarkable capabilities in generating code and plans for data science tasks~\cite{chen2021evaluating,chen2024can,gao2023automl,ghosh2024jupyter}. These approaches typically operate in an ``open-loop'' fashion: the LLM receives a prompt and generates a static pipeline or script. While impressive, this paradigm lacks the ability to adapt based on feedback from executing the operations.


\section{Problem Definition}\label{sec:prelimnaries}
\label{sec:problem_def}

The goal of Automated Data Preparation (AutoDP) is to find an optimal sequence of data transformation operators, called a pipeline, that maximizes the performance of a downstream machine learning model. We formalize this task as follows:

\begin{definition}[Data Preparation Pipeline]
A data preparation pipeline $P$ is a sequence of operators $P = (o_1, o_2, \dots, o_T)$ of a fixed length $T$, where each operator $o_t \in \mathcal{O}$ is drawn from a global library of available operators. Applying a pipeline $P$ to a raw dataset $D_{raw}$ produces a transformed dataset $D_{P}$.
\end{definition}

The core challenge lies in the combinatorial nature of the search space. With $|\mathcal{O}|$ available operators and pipelines of length $T$, the search space contains $|\mathcal{O}|^T$ possible configurations. To navigate this space, we model the pipeline construction process as a sequential decision-making problem.


\begin{definition}[Pipeline Construction as an MDP]
We model the task as a Markov Decision Process (MDP), defined by the tuple $\mathbf{M} = (\mathcal{S}, \mathcal{A}, \mathcal{P}, \mathcal{R}, \gamma)$.
\begin{itemize}
    \item $\mathcal{S}$ is the state space, where each state $s \in \mathcal{S}$ represents the dataset's characteristics (e.g., meta-features) after applying a partial pipeline.
    \item $\mathcal{A}$ is the action space, consisting of all available data preparation operators $\mathcal{O}$. At each step, selecting an action $a \in \mathcal{A}$ means applying operator $a$ to the current dataset.
    \item $\mathcal{P}(s'|s, a)$ is the state transition probability function, applying operator $a$ to state $s$ produces state $s'$.
    \item $\mathcal{R}(s, a)$ is the reward function, which we define as the performance of a model trained on the final transformed dataset.
    \item $\gamma \in [0, 1]$ is the discount factor. 
\end{itemize}
The objective is to learn a policy $\pi(a|s)$ that generates a pipeline, maximizing the expected cumulative reward.
\end{definition}

While the MDP formulation is standard, existing methods often introduce hard constraints to simplify the search, thereby sacrificing flexibility. To formally describe this limitation, we define operator types and the resulting constrained search.

\begin{definition}[Operator Types and Constrained Search]
The global operator library $\mathcal{O}$ can be partitioned into a set of $K$ disjoint semantic types $\mathcal{M} = \{m_1, \dots, m_K\}$ (e.g., Table~\ref{tab:components}). where $\mathcal{O} = \bigcup_{k=1}^{K} \mathcal{O}{m_k}$ and $\mathcal{O}{m_i} \cap \mathcal{O}_{m_j} = \emptyset$ for $i \neq j$. 

A constrained search strategy restricts the policy at state $s$ to a subset of actions $\mathcal{O}_c(s) \subset \mathcal{O}$ based on pre-defined rules. For instance, a common hard constraint forbids selecting an operator of type $m$ if an operator of the same type already exists in the partially built pipeline:
$\mathcal{O}_c(s) = \{o \in \mathcal{O} : \text{type}(o) \notin \text{types}(\mathcal{P}_{\text{partial}})\}$
\end{definition}

\noindent\textbf{Problem Statement.}
\label{prob:main}
Given a raw dataset $D_{raw}$ and a downstream task $\tau$, our goal is to find an optimal pipeline $P^*$ that maximizes performance when applied to $D_{raw}$:
\begin{equation}
    P^* = \arg\max_{P \in \Pi} \text{Performance}(\tau, P(D_{raw}))
\end{equation}
\noindent where $P(D_{raw})$ denotes the dataset resulting from applying pipeline $P$ to $D_{raw}$.
The key challenge is that a constrained search space $\Pi_c \subset \Pi$, employed by existing methods, may not contain $P^*$. Therefore, the problem is to devise a search strategy that can efficiently explore the full space $\Pi$ by replacing hard constraints with flexible, soft guidance.

\section{The SoftPipe Framework Overview}\label{sec:overview}
To address the challenge of flexible pipeline search, we introduce SoftPipe, a framework that replaces hard constraints with a flexible, soft guidance mechanism. The core idea is to formulate action selection as a Bayesian inference problem, where we principledly fuse high-level strategic knowledge with task-specific empirical value estimates. 
The architecture of SoftPipe is visualized in Figure~\ref{fig:architecture}. The framework operates through three conceptual layers:

\begin{enumerate}
    \item \textit{Strategic Planner:} At the top layer, a Large Language Model (LLM) assesses the current dataset state ($s_t$) and produces a probabilistic distribution over operator types. This serves as the strategic Bayesian \textit{prior}, guiding the search towards promising areas.

    \item \textit{Model Estimators:} The middle layer provides the estimators that forms the Bayesian \textit{likelihood}. It consists of two complementary components: a Reinforcement Learning (RL) Q-model to estimate the long-term, sequential value of an action, and a pre-trained Learning-to-Rank (LTR) model to provide an immediate score for its quality.

    \item \textit{Collaborative Policy:} The bottom layer fuses the strategic prior from the LLM with the empirical estimators from the RL and LTR models. This is achieved via Bayesian inference to produce a final, flexible policy distribution. The action for the next step of the pipeline is then sampled from this posterior distribution.
\end{enumerate}
The three layers collaborate through a continuous feedback loop: Raw data first enters the system where meta-features are extracted to represent the current state $s_t$, then feeds into all three components: $P_{\text{LLM}}(m|s_t)$, $Q(s_t, a_t)$ and $r_{\text{LTR}}(a_t|s_t)$, producing a unified policy distribution in the collaboration layer. After sampling an action at from this distribution and applying the corresponding operator, the system trnsitions to a new state $s_{t+1}$, and the process repeats iteratively until a complete pipeline is constructed.

This colloaboation design ensures each decision benefits from domain knowledge and learned experience, enabling discovery of optimal pipelines. Having established this overview, we now present the detailed theoretical foundation and implementation of each component.

\begin{figure}[t]
    \centering
    \includegraphics[width=0.6\linewidth]{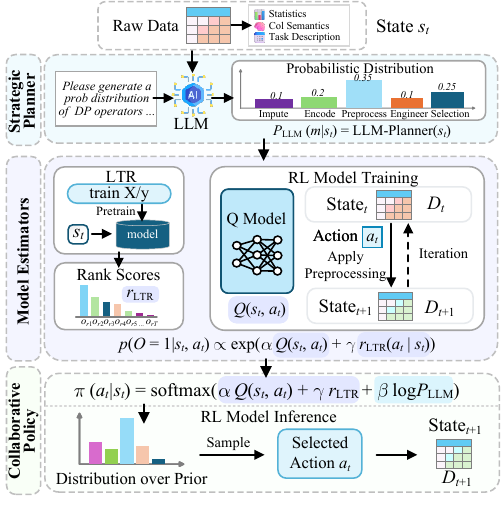}
    \caption{The SoftPipe Framework.} 
    \label{fig:architecture}
\end{figure}

\section{Methodology}\label{sec:proposed}
\subsection{A Bayesian Inference Perspective on Action Selection}
To move beyond the rigidity of hard constraints, we require a principled yet flexible mechanism for action selection. Instead of relying on deterministic rules, our approach need be able to weigh and combine multiple signals: strategic advice, quality predictions and value estimates. Bayesian inference provides the ideal mathematical foundation for this task.

Following the paradigm of Control as Inference \cite{levine2018reinforcement}, we introduce a binary latent variable $O \in \{0, 1\}$ representing the optimality of an action. The goal is to learn a policy $\pi(a_t|s_t)$ that models the posterior probability of an action being optimal, $p(a_t|s_t, O=1)$. Using Bayes' rule, this posterior can be decomposed:
\begin{equation}\label{eq:bayes_rule}
\begin{aligned}
    p(a_t|s_t, O=1) &= \frac{p(O=1|s_t, a_t) p(a_t|s_t)}{p(O=1|s_t)} \\ &\propto \underbrace{p(O=1|s_t, a_t)}_{\text{Likelihood}} \cdot \underbrace{p(a_t|s_t)}_{\text{Prior}}
\end{aligned}
\end{equation}
This formulation provides the theoretical scaffold for SoftPipe. The framework's components are designed to instantiate the \textit{prior} and the \textit{likelihood} terms as described next.

\subsubsection{The LLM as a Strategic Planner}
The prior distribution, $p(a_t|s_t)$, should encapsulate domain knowledge about which actions are sensible \textit{before} considering any task-specific rewards. We select a Large Language Model (LLM) to serve as this knowledge-rich prior due to its remarkable advantage: having been trained on vast corpora of text and code, possess a high-level understanding of common operator semantics. This makes them as a strategic prior that can guide exploration effectively from the very beginning.

The LLM acts as a strategic planner, assessing the current state $s_t$ (represented by meta-features) to produce a probability distribution $m \in \mathcal{M}$:
\[
P_{\text{LLM}}(m|s_t) = \text{LLM-Planner}(s_t)
\]
Where we define the prior probability for any specific action $a_t$ by marginalizing it belongs to:
\begin{equation}
p(a_t|s_t) \triangleq P_{\text{LLM}}(a_t|s_t) = \sum_{m \in \mathcal{M}} P_{\text{LLM}}(m|s_t) \cdot \mathbb{I}[a_t \in \mathcal{A}_m]
\label{eq:prior_def}
\end{equation}
Where $\mathbb{I}[\cdot]$ is the indicator function. This probabilistic guidance from the LLM is the cornerstone of our ``soft guidance''. 

\subsubsection{The Likelihood from RL and LTR}
The likelihood term, $p(O=1|s_t, a_t)$, quantifies the probability that action $a_t$ is optimal. SoftPipe models this likelihood by combining two complementary signals:
\begin{itemize}
    \item \textit{\underline{Long-Term Value (RL)}:} The agent's own action-value function, $Q(s_t, a_t)$, learned through reinforcement learning, estimates the long-term cumulative reward.
    \item \textit{\underline{Immediate Quality (LTR)}:} 
 While the LLM provides strategic guidance and the RL agent learns long-term value, we need a mechanism for immediate, fine-grained quality assessment. We fulfill this role with a pre-trained Learning-to-Rank (LTR) model. It provides a score,  $r_{\text{LTR}}(a_t|s_t)$, predicting an operator's immediate utility based on a large offline dataset of past experiences(represented by \textit{meta-feature, action, performance tuples}). While this provides a crucial ``warm start'', the model is unaware of long-term consequences, necessitating its fusion with the RL agent's Q-function.
\end{itemize}
We combine these to form the log-likelihood: $\log p(O=1|s_t, a_t) \propto \alpha Q(s_t, a_t) + \gamma r_{\text{LTR}}(a_t|s_t)$. This implies the likelihood is an exponential function of their weighted sum:
\begin{equation}
    p(O=1|s_t, a_t) \propto \exp\left(\alpha Q(s_t, a_t) + \gamma r_{\text{LTR}}(a_t|s_t)\right)
    \label{eq:likelihood_def}
\end{equation}
Here, $\alpha$ and $\gamma$ are hyperparameters that control the relative importance of long-term experience versus supervised immediate-quality signals in determining an action's optimality.

\subsection{A Collaborative Policy With Convergence Guarantee}
By substituting our defined prior \eqref{eq:prior_def} and likelihood \eqref{eq:likelihood_def} into the Bayesian posterior formula \eqref{eq:bayes_rule},  we derive the unnormalized log-posterior (logit) for each action $a_t$:
\begin{equation}\label{eq:logit}
\begin{aligned}
    \text{logit}(a_t) &= \log [p(O=1|s_t, a_t) \cdot p(a_t|s_t)] \\ 
    &\propto \log[\exp(\alpha Q(s_t, a_t) + \gamma r_{\text{LTR}}(a_t|s_t))] \\ 
    &\quad+ \log(P_{\text{LLM}}(a_t|s_t)) \\ 
    &= \alpha Q(s_t, a_t) + \gamma r_{\text{LTR}}(a_t|s_t) + \beta \log(P_{\text{LLM}}(a_t|s_t))
\end{aligned}
\end{equation}
where we introduce $\beta$ as a hyperparameter to control the strength of the prior. Applying the softmax function to these logits yields the final, normalized policy for SoftPipe:
\begin{equation}\label{eq:cogniq_policy}
\begin{aligned}
    \pi(a_t|s_t) = \text{softmax} ( \alpha Q(s_t, a_t) &+ \beta \log P_{\text{LLM}}(a_t|s_t) \\&+ \gamma r_{\text{LTR}}(a_t|s_t) )
\end{aligned}
\end{equation}
This theoretically-grounded policy has an interpretation: it is the optimal solution to a KL-regularized objective \cite{yan2024efficient, rafailov2023direct}:
\begin{equation}\label{eq:cogniq_kl}
\begin{aligned}
    \pi^* = \arg\max_\pi & \mathbb{E}_{a \sim \pi(\cdot|s)} \left[ \alpha Q(s,a) + \gamma r_{\text{LTR}}(s,a) \right] \\
    &- \beta \text{KL}\left[\pi(\cdot|s) \,||\, P_{\text{LLM}}(\cdot|s)\right]
\end{aligned}
\end{equation}
This means the agent learns to maximize a composite reward (from RL and LTR) while being regularized to stay close to the LLM's strategic prior. We now establish the theoretical convergence guarantee for this policy:  

\begin{theorem}[Convergence of Collaborative Policy]
\label{thm:convergence}
Let $\pi(a_t|s_t)$ be the SoftPipe policy defined in Equation (6). Under standard RL assumptions (bounded rewards, positive LLM prior $P_{LLM}(a|s) > 0$), the policy converges to the optimal solution of:
\begin{align}
\max_{\pi} \mathbb{E}_{\pi}\bigg[\sum_{t} \big(\alpha Q(s_t, a_t) + \gamma r_{LTR}(s_t, a_t)\big)\bigg] \nonumber \\
- \beta \mathbb{E}_{\pi}\bigg[\sum_{t} KL[\pi(\cdot|s_t) || P_{LLM}(\cdot|s_t)]\bigg]
\end{align}
\end{theorem}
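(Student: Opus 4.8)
The plan is to recognize Theorem~\ref{thm:convergence} as an instance of the convergence theory for KL-regularized (maximum-entropy) reinforcement learning, in which the LLM prior $P_{\text{LLM}}$ plays the role of a fixed reference policy and the composite signal $\alpha Q + \gamma r_{\text{LTR}}$ plays the role of a shaped per-step objective. Under this lens, the maximizer claimed in the theorem is the fixed point of a \emph{soft} (log-sum-exp) Bellman operator, and convergence reduces to a contraction argument followed by a policy-optimality argument. As a preliminary bookkeeping step I would disambiguate the symbol $\gamma$, which the paper uses both as the LTR weight in Equations (5)--(6) and, implicitly, as the MDP discount in the Problem Definition; I would denote the discount by $\gamma_{\text{disc}} \in [0,1)$ throughout the proof and keep $\gamma$ for the LTR weight.

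First I would establish the \emph{per-state variational identity}: for any fixed $s$ and any bounded $g(\cdot)$, the solution of $\max_{\pi(\cdot|s)} \mathbb{E}_{a\sim\pi}[g(a)] - \beta\,\mathrm{KL}[\pi(\cdot|s)\,\|\,P_{\text{LLM}}(\cdot|s)]$ is the Boltzmann policy $\pi(a|s)\propto P_{\text{LLM}}(a|s)\exp(g(a)/\beta)$, with optimal value $\beta\log\sum_a P_{\text{LLM}}(a|s)\exp(g(a)/\beta)$. This is a routine Lagrange-multiplier calculation over the probability simplex, where the assumed positivity $P_{\text{LLM}}(a|s)>0$ guarantees both that the KL term is finite and that the resulting log-sum-exp value is well defined. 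Specializing $g(a)=\alpha Q(s,a)+\gamma r_{\text{LTR}}(s,a)$ recovers precisely the softmax policy of Equation~\eqref{eq:cogniq_policy}, up to the standard temperature normalization that identifies the KL weight with the logit scale; this anchors the policy form to the objective and proves the single-step case.

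Second I would lift this to the sequential setting by defining the soft Bellman backup $\mathcal{T}$ on action-value functions, with soft value $V(s)=\beta\log\sum_a P_{\text{LLM}}(a|s)\exp\big((\alpha Q(s,a)+\gamma r_{\text{LTR}}(s,a))/\beta\big)$ and $(\mathcal{T}Q)(s,a)=\rho(s,a)+\gamma_{\text{disc}}\,\mathbb{E}_{s'\sim\mathcal{P}}[V(s')]$, where $\rho$ is the environment reward. The central lemma, which I expect to be the main obstacle, is that $\mathcal{T}$ is a $\gamma_{\text{disc}}$-contraction in the supremum norm. The crux is a sub-lemma that the log-sum-exp operator is non-expansive, i.e. $|V(s)-V'(s)|\le \max_a|Q(s,a)-Q'(s,a)|$, which follows from its monotonicity and $1$-Lipschitzness in the $\ell_\infty$ norm; composing this with the discount gives $\|\mathcal{T}Q-\mathcal{T}Q'\|_\infty\le \gamma_{\text{disc}}\|Q-Q'\|_\infty$. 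Bounded rewards ensure $\mathcal{T}$ maps the Banach space of bounded functions into itself, and the positivity of $P_{\text{LLM}}$ keeps $V$ finite so the operator is well defined everywhere.

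Finally I would invoke the Banach fixed-point theorem to conclude that $\mathcal{T}$ has a unique fixed point $Q^\star$ and that soft value iteration $Q_{k+1}=\mathcal{T}Q_k$ converges to it geometrically at rate $\gamma_{\text{disc}}$. To close the loop I would give a \emph{soft policy-improvement} argument: telescoping the per-state identity of the first step along trajectories shows that the Boltzmann policy induced by $Q^\star$, which is exactly Equation~\eqref{eq:cogniq_policy} evaluated at the fixed point, attains the supremum of the discounted KL-regularized cumulative objective in the theorem statement, and that no other policy can exceed it. Thus the learned policy converges to the stated optimum. I anticipate that every step except the contraction bound is essentially standard; the genuinely delicate points are the non-expansiveness estimate and the careful separation of the overloaded $\gamma$, both of which I would treat explicitly.
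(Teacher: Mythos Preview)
Your proposal is correct and considerably more rigorous than the paper's own argument. The paper's proof consists only of your first step: it writes down the single-state objective $\max_{\pi}\sum_a \pi(a|s)[\alpha Q(s,a)+\gamma r_{\text{LTR}}(s,a)]-\beta\,\mathrm{KL}[\pi\|P_{\text{LLM}}]$, asserts that differentiating and solving gives $\pi^*(a|s)\propto P_{\text{LLM}}(a|s)\exp\big((\alpha Q(s,a)+\gamma r_{\text{LTR}}(s,a))/\beta\big)$, notes this matches Equation~(6), and then closes with the single sentence ``As the Q-function converges through RL updates, the policy converges to this optimal form.'' There is no soft Bellman operator, no contraction estimate, no fixed-point argument, and no policy-improvement telescoping; the sequential part of the theorem is simply asserted. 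What you plan---the log-sum-exp non-expansiveness lemma, the $\gamma_{\text{disc}}$-contraction of the soft backup, Banach's theorem, and the soft policy-improvement step---is the standard machinery needed to turn that assertion into an actual proof, and it buys you a genuine convergence guarantee rather than a restatement of the per-state optimality condition. Your care in separating the overloaded $\gamma$ is also warranted; the paper does not address it. In short, your first step coincides with the entirety of the paper's proof, and your remaining steps fill the gap the paper leaves open.
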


\begin{proof}
The SoftPipe policy can be written as Equation~\ref{eq:cogniq_policy}.
This is the exact solution to the KL-regularized RL objective. To see this, consider the optimization problem for a single state $s$:
\begin{align}
\max_{\pi} \sum_a \pi(a|s)[\alpha Q(s,a) + \gamma r_{LTR}(s,a)] \nonumber \\
- \beta KL[\pi(\cdot|s) || P_{LLM}(\cdot|s)]
\end{align}

Taking the derivative and solving yields:
\begin{equation}
\pi^*(a|s) \propto P_{LLM}(a|s) \exp\left(\frac{\alpha Q(s,a) + \gamma r_{LTR}(s,a)}{\beta}\right)
\end{equation}

which matches our policy formulation. As the Q-function converges through RL updates, the policy converges to this optimal form.
\end{proof}

\subsection{SoftPipe Algorithm}

Algorithm~\ref{alg:SoftPipe} presents the core Collaborative action selection process of SoftPipe, structured as follows: Lines 1-4 initialize the three key components and extract the LLM strategic prior over operator types; Lines 5-9 compute empirical estimators by evaluating both long-term RL values and immediate LTR quality scores for all available operators; Lines 10-14 perform Bayesian fusion by computing logits that combine all three information sources and converting them to a normalized policy distribution; Lines 15-16 sample the final action from this posterior distribution. While the complete end to end train procedure is detailed in Algorithm~\ref{alg:SoftPipe-simple} in Appendix A.1.

\begin{algorithm}[h]
\caption{SoftPipe: Collaborative Action Selection}
\label{alg:SoftPipe}
\begin{algorithmic}[1]
\REQUIRE State $s_t$, Q-function $Q$, LTR model $f_{LTR}$, LLM planner
\REQUIRE Hyperparameters $\alpha$, $\beta$, $\gamma$
\ENSURE Selected action $a_t$

\STATE Step 1: Get LLM Strategic Prior
\STATE Query LLM with meta-features of $s_t$ to get type distribution
\STATE $P_{LLM}(m|s_t) \leftarrow$ LLM-Planner($s_t$) \COMMENT{Probability over operator types}
\STATE $P_{LLM}(a|s_t) \leftarrow \sum_{m \in \mathcal{M}} P_{LLM}(m|s_t) \cdot \mathbb{I}[a \in \mathcal{A}_m]$ \COMMENT{Using Eq. (3)}

\STATE Step 2: Compute Empirical Estimators
\FOR{each operator $a \in \mathcal{O}$}
    \STATE $q_a \leftarrow Q(s_t, a)$ \COMMENT{Long-term value from RL}
    \STATE $r^{LTR}_a \leftarrow f_{LTR}(s_t, a)$ \COMMENT{Immediate quality score}
\ENDFOR

\STATE Step 3: Bayesian Fusion via Softmax
\FOR{each operator $a \in \mathcal{O}$}
    \STATE $\text{logit}(a) \leftarrow \alpha \cdot q_a + \gamma \cdot r^{LTR}_a + \beta \cdot \log P_{LLM}(a|s_t)$
\ENDFOR
\STATE Compute policy $\pi(a|s_t) \leftarrow \text{softmax}(\text{logits})$ \COMMENT{Using Eq. (6)}

\STATE Step 4: Sample Action from Posterior
\STATE Sample action $a_t \sim \pi(\cdot|s_t)$
\RETURN $a_t$
\end{algorithmic}
\end{algorithm}

\subsection{Optimality Gap Analysis}
\label{sec:theory}

A key theoretical justification for SoftPipe is its ability to discover superior solutions that are inaccessible to hard-constraint methods. The following theorem quantifies this advantage by providing a lower bound on the performance improvement over policies restricted by hard constraints.

\begin{theorem}[Optimality Gap Bound]
\label{thm:gap}
Let $\mathcal{O}_c(s) \subset \mathcal{O}$ be the constrained operator set under hard constraints (e.g., CtxPipe's restriction), and $\mathcal{O}$ be the full operator library accessible to SoftPipe. Define:
\begin{itemize}
    \item $V_{hard}^*(s) = \max_{a \in \mathcal{O}_c(s)} Q^*(s,a)$: best value under constraints
    \item $V^*(s) = \max_{a \in \mathcal{O}} Q^*(s,a)$: best value without constraints
\end{itemize}

Then SoftPipe achieves value $V_{SoftPipe}(s)$ satisfying:
\begin{equation}
V_{SoftPipe}(s) - V_{hard}^*(s) \geq \frac{c}{1 + e^{-\Delta/\beta}} \cdot \Delta
\end{equation}
where $\Delta = V^*(s) - V_{hard}^*(s) \geq 0$ is the constraint-induced gap, and $c > 0$ is a constant depending on $|\mathcal{O}|$ and the LLM prior.
\end{theorem}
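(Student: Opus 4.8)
The plan is to express $V_{SoftPipe}(s)$ as the expected optimal value under the SoftPipe policy, $V_{SoftPipe}(s) = \sum_{a \in \mathcal{O}} \pi(a|s)\, Q^*(s,a)$, and then isolate the contribution of the globally optimal action $a^* = \arg\max_{a \in \mathcal{O}} Q^*(s,a)$, which by definition satisfies $Q^*(s, a^*) = V^*(s) = V_{hard}^*(s) + \Delta$. The key observation is that, unlike a hard-constrained policy, SoftPipe places strictly positive probability on $a^*$ even when $a^* \notin \mathcal{O}_c(s)$, because the policy in Equation~\eqref{eq:cogniq_policy} is a full-support softmax and the LLM prior is assumed positive. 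I would therefore decompose the improvement as
\[
V_{SoftPipe}(s) - V_{hard}^*(s) = \pi(a^*|s)\,\Delta + \sum_{a \neq a^*} \pi(a|s)\bigl(Q^*(s,a) - V_{hard}^*(s)\bigr),
\]
and aim to show that the leading term dominates in the claimed form.

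The core quantitative step is a lower bound on $\pi(a^*|s)$ that produces the sigmoid factor. Using the equivalent Gibbs form established in Theorem~\ref{thm:convergence}, $\pi(a|s) \propto P_{LLM}(a|s)\exp\bigl((\alpha Q^*(s,a) + \gamma r_{LTR}(s,a))/\beta\bigr)$, I would compare the mass on $a^*$ against the mass on the best constrained action $a_{hard} = \arg\max_{a \in \mathcal{O}_c(s)} Q^*(s,a)$. Restricting attention to these two actions, the conditional probability of selecting $a^*$ is exactly a logistic function of the scaled value difference,
\[
\frac{\pi(a^*|s)}{\pi(a^*|s) + \pi(a_{hard}|s)} = \frac{1}{1 + \rho\, e^{-\alpha \Delta/\beta}},
\]
where $\rho$ collects the ratios of the LLM priors and LTR scores of the two actions. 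Folding $\alpha$ into the temperature and absorbing $\rho$ together with the fraction of total softmax mass carried by the pair $\{a^*, a_{hard}\}$ into a single constant $c > 0$ — a constant that depends on $|\mathcal{O}|$ through the normalizer and on the positive prior $P_{LLM}(a^*|s)$ — yields $\pi(a^*|s) \geq \frac{c}{1 + e^{-\Delta/\beta}}$.

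The main obstacle is controlling the residual sum $\sum_{a \neq a^*} \pi(a|s)(Q^*(s,a) - V_{hard}^*(s))$, since every constrained action $a \in \mathcal{O}_c(s)$ contributes \emph{non-positively} by the very definition of $V_{hard}^*(s)$ as the constrained maximum, so a naive expected-value argument does not immediately deliver a lower bound of the advertised sign. I would resolve this by interpreting the residual mass through the mixture the policy induces over the key pair — treating the non-$a^*$ probability as backing off to at least the constrained optimum $a_{hard}$, which is exactly the baseline any hard-constraint method could guarantee — so that the residual term is bounded below by zero and only the $\pi(a^*|s)\,\Delta$ contribution survives. Making this step rigorous will require either an explicit assumption that the combined prior-plus-LTR signal does not concentrate mass on actions strictly worse than $V_{hard}^*(s)$, or restricting the comparison to the two-action sub-policy and invoking the monotonic-improvement guarantee of KL-regularized policy iteration; pinning down the cleanest sufficient condition is where the real work lies. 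Once the residual is shown to be nonnegative, the chain $V_{SoftPipe}(s) - V_{hard}^*(s) \geq \pi(a^*|s)\,\Delta \geq \frac{c}{1+e^{-\Delta/\beta}}\,\Delta$ follows immediately.
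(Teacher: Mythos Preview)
Your plan matches the paper's proof almost step for step: write $V_{SoftPipe}(s)=\sum_{a}\pi(a\mid s)\,Q^*(s,a)$, isolate the contribution of the unconstrained optimum $a^*$, lower-bound $\pi(a^*\mid s)$ by a sigmoid in $\Delta/\beta$ with a constant determined by the prior and $|\mathcal{O}|$, and read off the bound. Two minor differences are worth flagging. First, the paper does not use your two-action logistic comparison; it instead upper-bounds the softmax normalizer by $|\mathcal{O}|\cdot\max_{o'}P_{LLM}(o'|s)\exp\!\bigl(\alpha Q^*(s,o')/\beta\bigr)$ (after assuming $r_{LTR}$ is unbiased), which immediately gives $c=P_{LLM}(a^*\mid s)\big/\bigl(|\mathcal{O}|\max_{o'}P_{LLM}(o'|s)\bigr)$ and the same sigmoid factor. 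Second, the residual $\sum_{a\neq a^*}\pi(a\mid s)\bigl(Q^*(s,a)-V_{hard}^*(s)\bigr)$ that you correctly identify as the delicate step is not addressed in the paper: it simply drops all non-$a^*$ terms via $V_{SoftPipe}(s)\geq \pi(a^*\mid s)\,Q^*(s,a^*)$ and then asserts the final inequality, so your caution there goes beyond what the paper itself supplies rather than missing an argument it contains.
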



\begin{proof}  $o^* = \arg\max_{o \in \mathcal{O}} Q^*(s,o)$, the optimal operator is excluded by hard constraints (i.e., $o^* \notin \mathcal{O}_c(s)$), SoftPipe can still select it with probability:
\begin{equation}
\pi(o^*|s) \geq \frac{P_{LLM}(o^*|s)}{Z} \exp\left(\frac{\alpha Q^*(s,o^*)}{\beta}\right)
\end{equation}
where $Z$ is the normalization constant. Since $P_{LLM}(o^*|s) > 0$, this probability is positive.

The expected value under SoftPipe is:
\begin{equation}
V_{SoftPipe}(s) = \sum_{o \in \mathcal{O}} \pi(o|s) Q^*(s,o) \geq \pi(o^*|s) Q^*(s,o^*)
\end{equation}


Next, we need to bound $\pi(o^*|s)$ from below. From Equation~\ref{eq:cogniq_policy} and Equation~\ref{eq:logit}:
\begin{equation}
\pi(o^*|s) = \frac{\exp(logit(o^*))}{\sum_{o' \in \mathcal{O}} \exp(logit(o'))}
\end{equation}

For simplicity, assume $r_{LTR}$ provides unbiased estimates. Then:
\begin{equation}
\pi(o^*|s) \geq \frac{P_{LLM}(o^*|s) \exp(\alpha Q^*(s,o^*)/\beta)}{|\mathcal{O}| \cdot \max_{o'} P_{LLM}(o'|s) \exp(\alpha Q^*(s,o')/\beta)}
\end{equation}

Using the fact that $Q^*(s,o^*) - Q^*(s,o_{hard}^*) = \Delta$ and defining $c = \frac{P_{LLM}(o^*|s)}{|\mathcal{O}| \cdot \max_{o'} P_{LLM}(o'|s)}$:
\begin{equation}
\pi(o^*|s) \geq \frac{c \cdot \exp(\alpha\Delta/\beta)}{1 + \exp(\alpha\Delta/\beta)} = \frac{c}{1 + e^{-\alpha\Delta/\beta}}
\end{equation}
Substituting back:
\begin{equation}
V_{SoftPipe}(s) - V_{hard}^*(s) \geq \frac{c}{1 + e^{-\alpha\Delta/\beta}} \cdot \Delta
\end{equation}

For notational simplicity, we can absorb $\alpha$ into $\beta$ to get the stated result.
\end{proof}

\section{Experiments}\label{sec:exp}
\subsection{Experimental Setup} 
\noindent\textbf{Hardware and OS.} We run our experiments on a server with 128 AMD EPYC 7543 CPUs, each with 32 cores, and 256GB memory in total. The local LLMs run on NVIDIA A100 with CUDA 12.8. The OS is Ubuntu 20.04 with Linux kernel 5.15.0-138-generic. The Python version of our experiment is 3.12.9. 

 \noindent\textbf{Datasets.} We evaluate all methods on 18 diverse, real-world datasets from the OpenML repository~\cite{vanschoren2014openml} used by DiffPrep~\cite{li2023diffprep}. These datasets are commonly used benchmarks in AutoML research and span a variety of domains, sizes, and complexities~\cite{li2021cleanml,ledell2020h2o}. The code and dataset are available at github.com/cogniqh0719/project-SoftPipe.

 \noindent\textbf{Baselines.} We compare SoftPipe against a strong set of baselines representing different AutoDP methods:
 \begin{itemize}
    \item \textbf{Standard RL}:  A standard DQN algorithm, and a standard Q-Learning algorithm (abbreviated as QL). Using an $\epsilon$-greedy exploration strategy without any external guidance.
    \item \textbf{RL with Hard Constraints}: CtxPipe~\cite{ctxpipe}, the current state-of-the-art method, and HAIPipe~\cite{chen2023haipipe}). These methods embody the hard-constraint paradigm we critique, using pre-defined rules to restrict the action space.
    \item \textbf{Rule-based Hierarchical RL (HRL)}: A Hard Hierarchical RL agent (Options-DQN) where a high-level policy's choice of an operator type creates a hard commitment, restricting the low-level policy to operators only within that type.
    \item \textbf{LLM-Planner}: Zero-shot pipeline generation using Llama-3, Qwen3 \cite{yang2025qwen3} and GPT-4o. These represent non-RL baselines that lack a closed-loop feedback mechanism~\cite{zheng2024llamafactory}.
    \item \textbf{Classic AutoML Frameworks:} Auto Sklearn~\cite{10.5555/2969442.2969547} uses Bayesian optimization (abbreviated as AutoSk) and TPOT~\cite{olson2016evaluation} uses evolutionary algorithms. 
\end{itemize}

\begin{table*}[h!]
\centering
\caption{Main performance comparison on 18 datasets. SoftPipe achieves the highest average performance and the best (lowest) average rank, demonstrating its consistent superiority over a wide range of baselines. The best result on each dataset is in \textbf{bold}.}
\label{tab:main-results}
\small{
\begin{tabular}{l|cccccccccc|c}
\toprule
\textbf{Dataset} &
  \textbf{QL} &
  \textbf{DQN} &
  \textbf{CtxPipe} &
  \textbf{HAI-AI} &
  \textbf{HRL} &
  \textbf{Llama3.3} &
  \textbf{Qwen3} &
  \textbf{GPT-4o} &
  \textbf{AutoSk} &
  \textbf{TPOT} &
  \textbf{SoftPipe} \\ \midrule
abalone        & 0.274          & 0.257 & \textbf{0.287} & 0.260          & 0.263 & 0.246 & 0.258          & 0.267 & 0.258          & 0.255 & 0.274          \\
ada\_prior     & 0.831          & 0.812 & 0.818          & 0.801          & 0.841 & 0.794 & 0.828          & 0.839 & 0.845          & 0.809 & \textbf{0.846} \\
avila          & 0.682          & 0.620 & 0.759          & 0.630          & 0.486 & 0.613 & 0.542          & 0.609 & 0.701          & 0.604 & \textbf{0.866} \\
connect-4      & 0.747          & 0.664 & 0.763          & 0.775          & 0.664 & 0.657 & 0.665          & 0.713 & \textbf{0.797} & 0.661 & 0.795          \\
eeg            & 0.632          & 0.613 & 0.740          & 0.556          & 0.642 & 0.631 & 0.556          & 0.623 & \textbf{0.917} & 0.585 & 0.837          \\
google         & 0.633          & 0.594 & 0.590          & 0.550          & 0.585 & 0.582 & 0.645          & 0.548 & 0.604          & 0.606 & \textbf{0.666} \\
house          & \textbf{0.938} & 0.908 & 0.818          & 0.928          & 0.918 & 0.918 & 0.914          & 0.904 & 0.927          & 0.928 & 0.932          \\
jungle\_chess  & 0.814          & 0.697 & \textbf{0.861} & 0.760          & 0.674 & 0.677 & 0.677          & 0.760 & 0.815          & 0.664 & \textbf{0.861} \\
micro          & 0.613          & 0.589 & 0.605          & \textbf{0.633} & 0.571 & 0.609 & 0.578          & 0.567 & 0.591          & 0.578 & 0.625          \\
mozilla4       & 0.933          & 0.871 & \textbf{0.940} & 0.870          & 0.860 & 0.915 & 0.921          & 0.867 & \textbf{0.940} & 0.910 & \textbf{0.940} \\
obesity        & 0.790          & 0.825 & 0.868          & 0.768          & 0.726 & 0.759 & 0.681          & 0.851 & 0.830          & 0.808 & \textbf{0.898} \\
page-blocks    & \textbf{0.973} & 0.952 & 0.965          & 0.935          & 0.954 & 0.949 & \textbf{0.973} & 0.955 & \textbf{0.973} & 0.962 & 0.965          \\
pbcseq         & 0.710          & 0.735 & \textbf{0.805} & 0.733          & 0.720 & 0.730 & 0.692          & 0.704 & 0.754          & 0.724 & 0.753          \\
pol            & 0.981          & 0.796 & 0.949          & 0.916          & 0.890 & 0.852 & 0.811          & 0.921 & \textbf{0.984} & 0.981 & 0.981          \\
run\_or\_walk  & 0.945          & 0.972 & 0.956          & 0.915          & 0.834 & 0.837 & 0.676          & 0.718 & \textbf{0.984} & 0.934 & 0.976          \\
shuttle        & 0.995          & 0.965 & \textbf{1.000} & 0.951          & 0.980 & 0.972 & 0.976          & 0.994 & 0.997          & 0.965 & \textbf{1.000} \\
uscensus       & 0.826          & 0.832 & 0.845          & 0.807          & 0.846 & 0.818 & 0.848          & 0.841 & \textbf{0.851} & 0.814 & 0.846          \\
wall-robot & 0.924          & 0.853 & 0.946          & 0.896          & 0.864 & 0.843 & 0.815          & 0.812 & 0.951          & 0.941 & \textbf{0.961} \\ \midrule
AVERAGE        & 0.791          & 0.753 & 0.806          & 0.760          & 0.740 & 0.745 & 0.725          & 0.750 & 0.818          & 0.763 & \textbf{0.835} \\
RANK           & 4.28           & 7.33  & 3.72           & 7.17           & 7.50  & 8.17  & 7.44           & 7.44  & 2.72           & 7.22  & \textbf{1.78} \\ \bottomrule
\end{tabular}}
\end{table*}

\noindent\textbf{Implementation Details.} For RL-based methods, we use a Q-Learning architecture where the action space consists of 24 available operators and a DQN architecture with a 3-layer MLP for the Q-network and a replay buffer of size 5,000. The LLM used is Llama3.3, prompted with a structured description. The LTR model is a LightGBM ranker trained on an offline dataset of (state, action, performance) tuples collected from a wide range of previous experiments. The operator library consists of 24 common operators for imputation, encoding, scaling, feature preprocessing, selection, and engineering, listed in Table~\ref{tab:components}. The downstream ML model for evaluation is Logistic Regression.

\noindent\textbf{Hyperparameters.}
To ensure the reproducibility of our results, we provide the hyperparameter settings for SoftPipe. The key parameters governing the synergy between the LLM prior, LTR scores, and RL values were determined through one-variable-at-a-time strategy on a hold-out set of validation datasets, aiming to maximize the average validation performance. The standard reinforcement learning parameters were set to commonly used values in the literature. All detail settings and analysis are summarized as Table~\ref{tab:hyperparams} in Appendix A.3 and A.4.


\noindent\textbf{Evaluation Metrics.} For the primary task, we report the prediction accuracy of a Logistic Regression model trained on the datasets transformed by the generated pipelines. We also report the average accuracy \textit{ranking} of each method across all datasets. To evaluate efficiency, we measure \textit{inference time}, \textit{pipeline length}. Especially, we plot the best performance found versus the number of pipelines evaluated. 

\subsection{Overall Performance Comparison}
Table \ref{tab:main-results} presents the main experimental results, comparing SoftPipe against all baselines across the 18 datasets.

The results clearly demonstrate the superiority of SoftPipe. With an average score of \textbf{0.834} and an average rank of \textbf{1.78}, it significantly outperforms all other methods. Several key observations can be made:
\begin{itemize}
    \item \textit{\underline{Superiority over Standard RL}}: SoftPipe surpasses standard RL methods like Q-learning (0.791) and DQN (0.753). This confirms that its guided exploration mechanism is far more effective than the undirected $\epsilon$-greedy strategy, which often struggles in the vast search space.
    \item \textit{\underline{Superiority over Hard-Constraint RL}}: SoftPipe outperforms CtxPipe and HAI-AI. This empirically validates our central claim: the flexibility to escape hard constraints allows SoftPipe to discover better pipelines.
    \item \textit{\underline{Advantage over Hierarchy RL}}: SoftPipe (Rank 1.78) also shows a clear advantage over HRL baseline (Rank 7.50). This empirically validates our central claim that the ``soft guidance'' avoids the suboptimal commitments of a rigid hierarchical structure, allowing it to find better solutions.
    \item \textit{\underline{Effectiveness beyond Pure LLM}}: While LLMs provide a strong basis for reasoning, SoftPipe's adaptive, experience-driven approach outperforms zero-shot LLM agents like GPT-4o (Rank 7.44), Llama3.3 (Rank 8.17), Qwen3 (Rank 7.44). This highlights the importance of grounding high-level knowledge with concrete, task-specific feedback from RL and LTR.
\end{itemize}
\begin{figure*}[htbp]
    \centering
\includegraphics[width=\linewidth]{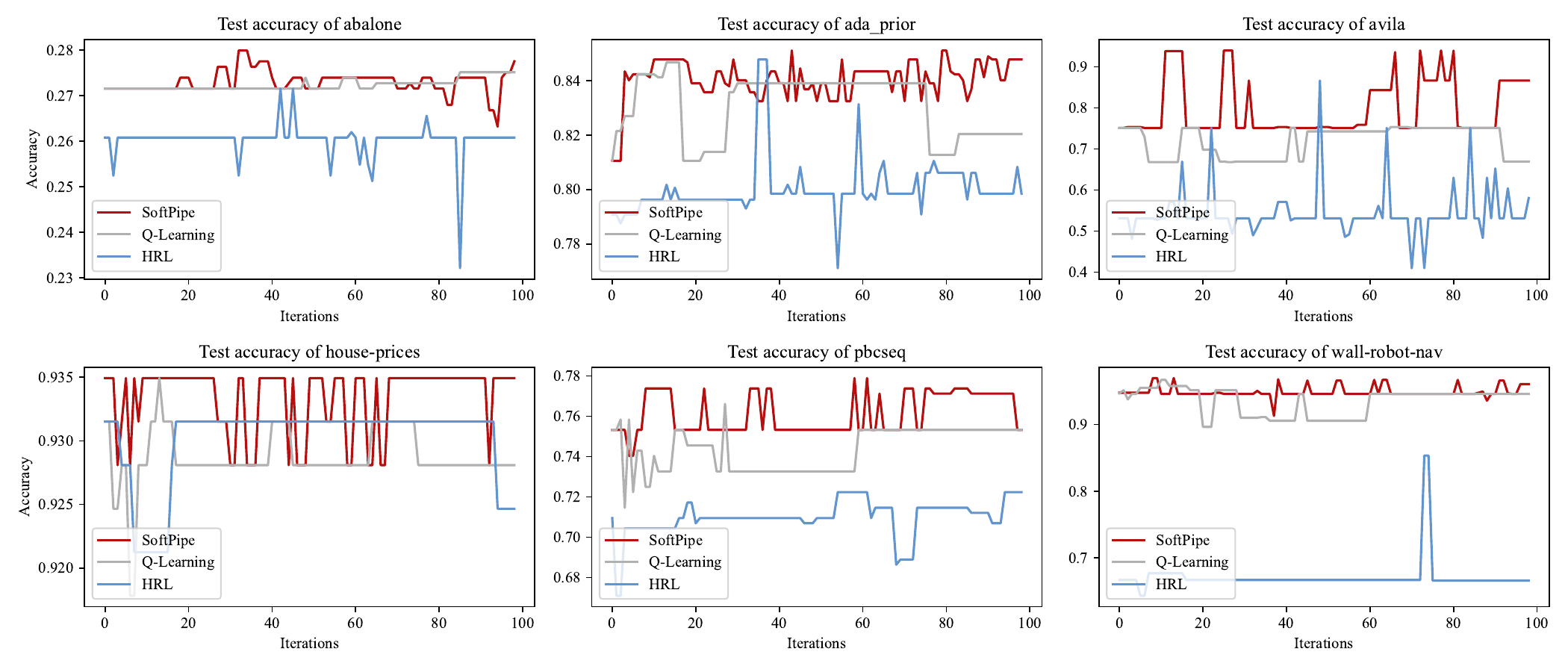}
    \caption{Learning curves on six representative datasets (abalone, ada\_prior, avila, house-prices, pbcseq, wall-robot-nav)}
    \label{fig:learning-curves}
\end{figure*}

\begin{figure}[t]
    \centering
    \includegraphics[width=0.5\linewidth]{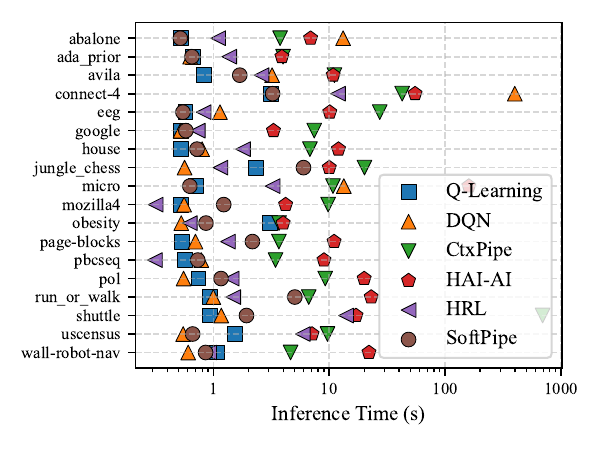}
    \caption{Comparison of inference time} 
    \label{fig:running_time}
\end{figure}

\subsubsection{Learning Efficiency and Pipeline Characteristics}
Figure~\ref{fig:learning-curves} illustrates the learning efficiency of SoftPipe compared to key RL baselines. To provide a clear view of the learning dynamics, we have plotted the performance curves on six representative datasets from our evaluation suite of 18. The learning curve for SoftPipe consistently shows a much steeper initial ascent, which can be attributed to two core components of our framework: the LTR model provides a dense, immediate reward signal, giving the agent a ``warm start,'' while the LLM prior guides exploration towards promising regions from the very beginning. In contrast, both the standard Q-Learning and the HRL baselines exhibit slower learning and often converge to suboptimal solutions. 

\noindent\textbf{Inference Time and Pipeline Length.} Figure~\ref{fig:running_time} shows that SoftPipe's inference time is competitive with other baseline methods. Furthermore, this competitiveness is underscored by the resulting pipelines. As shown in Table~\ref{tab:pipeline-length}, SoftPipe discovers substantially shorter pipelines, with an average length of only \textbf{2.72} operators.  Shorter pipelines are not only more interpretable but also faster to execute and evaluate during the search process itself. 

\begin{table}[t]
    \centering
    \caption{Comparison of average pipeline length of SoftPipe and baselines}
    \begin{tabular}{cc|cc}
\toprule
\textbf{Method} & \textbf{Avg Pipe Len} & \textbf{Method} & \textbf{Avg Pipe Len}  \\
\midrule 
SoftPipe    & 2.72  & CtxPipe   & 6.00 \\
Q-Learning   & 2.56  & DQN       & 3.33 \\
HRL         & 6.00  & Llama3.3  & 4.11 \\
TPOT        & 2.78  & Qwen3     & 3.72 \\
\bottomrule
    \end{tabular}
    \label{tab:pipeline-length}
\end{table}

\subsection{Ablation Study.}
To isolate the contribution of each component within the SoftPipe framework, we conduct a thorough ablation study. Based on the experimental design details in the supplementary materials, we evaluate several variants of our model where one or more components are removed. The results, averaged across all datasets, are presented in Table \ref{tab:ablation}.

\begin{table}[h!]
\centering
\caption{Ablation study results. The full SoftPipe model achieves the best accuracy and rank. Removing any component leads to a performance drop, confirming their individual and collaborative contributions.}
\label{tab:ablation}
\begin{tabular}{lcc}
\toprule
\textbf{Setting} & \textbf{Accuracy ↑} & \textbf{Rank ↓} \\
\midrule
RL (Baseline) & 0.791 & 4.28 \\
\midrule
$P_{\text{LLM}}$ (w/o LTR) & 0.801 & 3.72 \\
LTR (w/o $P_{\text{LLM}}$) & 0.804 & 2.83 \\
\midrule
\textbf{SoftPipe (Full Model)} & \textbf{0.834} & \textbf{1.78}* \\
\bottomrule
\end{tabular}
\end{table}

\section{Conclusion}\label{sec:conclusion}
In this paper, we introduced SoftPipe, a novel reinforcement learning framework for automated data preparation that replaces the rigid "hard constraints" of existing methods with flexible "soft guidance." By formulating action selection as a Bayesian inference problem, SoftPipe synergistically combines strategic knowledge from LLMs, immediate quality assessments from LTR models, and long-term value estimates from RL to navigate the vast combinatorial search space more effectively. The wall-robot-nav case study exemplifies this advantage: SoftPipe discovered a superior pipeline combining multiple operators of the same type, a configuration inaccessible to hard-constraint methods like CtxPipe.
SoftPipe's success suggests that the principle of soft guidance over hard constraints has broader implications for combinatorial optimization in AutoML. Future work could explore enhanced LLM prompting strategies, dynamic pipeline lengths, and applications to other AutoML challenges.

\bibliographystyle{unsrt}  
\bibliography{main}  

\begin{thebibliography}{10}

\bibitem{pecan2024data}
{Pecan}.
\newblock Data preparation for machine learning.
\newblock \url{https://www.pecan.ai/blog/data-preparation-for-machine-learning/}, 2024.
\newblock Accessed: 2025-06-25.

\bibitem{challange2025}
{Zilliz}.
\newblock What are the challenges of implementing automl?
\newblock \url{https://milvus.io/ai-quick-reference/what-are-the-challenges-of-implementing-automl//}, 2025.
\newblock Accessed: 2025-06-10.

\bibitem{John2025data}
{John Morrell}.
\newblock Agile data preparation \& exploration for cloud machine learning.
\newblock \url{https://www.datameer.com/blog/agile-data-preparation-exploration-for-cloud-machine-learning//}, 2024.
\newblock Accessed: 2025-06-25.

\bibitem{berti2019learn2clean}
Laure Berti-Equille.
\newblock Learn2clean: Optimizing the sequence of tasks for web data preparation.
\newblock In {\em The world wide web conference}, pages 2580--2586, 2019.

\bibitem{heffetz2020deepline}
Yuval Heffetz, Roman Vainshtein, Gilad Katz, and Lior Rokach.
\newblock Deepline: Automl tool for pipelines generation using deep reinforcement learning and hierarchical actions filtering.
\newblock In {\em Proceedings of the 26th ACM SIGKDD international conference on knowledge discovery \& data mining}, pages 2103--2113, 2020.

\bibitem{shang2019democratizing}
Zeyuan Shang, Emanuel Zgraggen, Benedetto Buratti, Ferdinand Kossmann, Philipp Eichmann, Yeounoh Chung, Carsten Binnig, Eli Upfal, and Tim Kraska.
\newblock Democratizing data science through interactive curation of ml pipelines.
\newblock In {\em Proceedings of the 2019 international conference on management of data}, pages 1171--1188, 2019.

\bibitem{yang2021auto}
Junwen Yang, Yeye He, and Surajit Chaudhuri.
\newblock Auto-pipeline: synthesizing complex data pipelines by-target using reinforcement learning and search.
\newblock {\em arXiv preprint arXiv:2106.13861}, 2021.

\bibitem{koka2025cleansurvival}
Yousef Koka, David Selby, Gerrit Gro{\ss}mann, and Sebastian Vollmer.
\newblock Cleansurvival: Automated data preprocessing for time-to-event models using reinforcement learning.
\newblock {\em arXiv preprint arXiv:2502.03946}, 2025.

\bibitem{chen2025advancing}
Shuang Chen, Yue Guo, Zhaochen Su, Yafu Li, Yulun Wu, Jiacheng Chen, Jiayu Chen, Weijie Wang, Xiaoye Qu, and Yu~Cheng.
\newblock Advancing multimodal reasoning: From optimized cold start to staged reinforcement learning.
\newblock {\em arXiv preprint arXiv:2506.04207}, 2025.

\bibitem{ctxpipe}
Haotian Gao, Shaofeng Cai, Tien Tuan~Anh Dinh, Zhiyong Huang, and Beng~Chin Ooi.
\newblock Ctxpipe: Context-aware data preparation pipeline construction for machine learning.
\newblock {\em Proc. ACM Manag. Data}, 2(6), December 2024.

\bibitem{chen2023haipipe}
Sibei Chen, Nan Tang, Ju~Fan, Xuemi Yan, Chengliang Chai, Guoliang Li, and Xiaoyong Du.
\newblock Haipipe: Combining human-generated and machine-generated pipelines for data preparation.
\newblock {\em Proceedings of the ACM on Management of Data}, 1(1):1--26, 2023.

\bibitem{olson2016evaluation}
Randal~S Olson, Nathan Bartley, Ryan~J Urbanowicz, and Jason~H Moore.
\newblock Evaluation of a tree-based pipeline optimization tool for automating data science.
\newblock In {\em Proceedings of the genetic and evolutionary computation conference 2016}, pages 485--492, 2016.

\bibitem{10.5555/2969442.2969547}
Matthias Feurer, Aaron Klein, Katharina Eggensperger, Jost~Tobias Springenberg, Manuel Blum, and Frank Hutter.
\newblock Efficient and robust automated machine learning.
\newblock In {\em Proceedings of the 29th International Conference on Neural Information Processing Systems - Volume 2}, NIPS'15, page 2755–2763, Cambridge, MA, USA, 2015. MIT Press.

\bibitem{rezig2019datacivilizer}
El~Kindi Rezig, Lei Cao, Michael Stonebraker, Giovanni Simonini, Wenbo Tao, Samuel Madden, Mourad Ouzzani, Nan Tang, and Ahmed~K Elmagarmid.
\newblock Data civilizer 2.0: A holistic framework for data preparation and analytics.
\newblock {\em Proceedings of the VLDB Endowment}, 12(12):1954--1957, 2019.

\bibitem{DynaML}
Amit Kumar~Jaiswal Mandar~Chandorkar.
\newblock Dynaml: a scala \& jvm machine learning toolbox for research, education \& industry., 2021.

\bibitem{h2o}
Candel Arno and LeDell Erin.
\newblock {\em Deep Learning with H2O}, 3 2025.

\bibitem{zhang2025causalcomrl}
Zhengzhe Zhang, Wenjia Meng, Haoliang Sun, and Gang Pan.
\newblock Causalcomrl: Context-based offline meta-reinforcement learning with causal representation.
\newblock {\em arXiv preprint arXiv:2502.00983}, 2025.

\bibitem{yu2018towards}
Yang Yu.
\newblock Towards sample efficient reinforcement learning.
\newblock In {\em IJCAI}, pages 5739--5743, 2018.

\bibitem{khurana2018feature}
Udayan Khurana, Horst Samulowitz, and Deepak Turaga.
\newblock Feature engineering for predictive modeling using reinforcement learning.
\newblock In {\em Proceedings of the AAAI Conference on Artificial Intelligence}, volume~32, 2018.

\bibitem{zagatti2021metaprep}
Fernando~Rezende Zagatti, Lucas~Cardoso Silva, Lucas Nildaimon Dos~Santos Silva, Bruno~Silva Sette, Helena de~Medeiros~Caseli, Daniel Lucr{\'e}dio, and Diego~Furtado Silva.
\newblock Metaprep: Data preparation pipelines recommendation via meta-learning.
\newblock In {\em 2021 20th IEEE International Conference on Machine Learning and Applications (ICMLA)}, pages 1197--1202. IEEE, 2021.

\bibitem{chen2021evaluating}
Mark Chen, Jerry Tworek, Heewoo Jun, Qiming Yuan, Henrique Ponde de~Oliveira Pinto, Jared Kaplan, Harri Edwards, Yuri Burda, Nicholas Joseph, Greg Brockman, et~al.
\newblock Evaluating large language models trained on code.
\newblock {\em arXiv preprint arXiv:2107.03374}, 2021.

\bibitem{chen2024can}
Yong-qiang Chen, Xin-yu Zhang, Lei-lei Li, Jia-xin Bai, Wang-chun Shu, Ai-ti Awuti, Bo-wen Zhang, Yuan-lin Zhang, and Ai-xin Feng.
\newblock Can large language models be data-driven reasoners for answering tabular questions?
\newblock {\em arXiv preprint arXiv:2402.18086}, 2024.

\bibitem{gao2023automl}
Zhipeng Gao, Yuqi Zhang, Jiali Li, Yifan Zhang, Zhongyi Liu, Zhineng Chen, Jianjun Chen, Shaohua Yan, Sicheng Zhao, and Wei Huang.
\newblock Automl-gpt: A gpt-based tool for automated machine learning.
\newblock {\em arXiv preprint arXiv:2311.08772}, 2023.

\bibitem{ghosh2024jupyter}
Bhaskar Ghosh, Anijus Khot, Debanjali Mitra, Pin-Yu Chen, and Danish Pruthi.
\newblock Jupyter-eval: A large-scale evaluation of large language models for data science code.
\newblock {\em arXiv preprint arXiv:2402.16315}, 2024.

\bibitem{levine2018reinforcement}
Sergey Levine.
\newblock Reinforcement learning and control as probabilistic inference: Tutorial and review.
\newblock {\em arXiv preprint arXiv:1805.00909}, 2018.

\bibitem{yan2024efficient}
Xue Yan, Yan Song, Xidong Feng, Mengyue Yang, Haifeng Zhang, Haitham~Bou Ammar, and Jun Wang.
\newblock Efficient reinforcement learning with large language model priors.
\newblock In {\em Submitted to International Conference on Learning Representations}, 2025.
\newblock Anonymous submission.

\bibitem{rafailov2023direct}
Rafael Rafailov, Archit Sharma, Eric Mitchell, Stefano Ermon, Christopher~D Manning, and Chelsea Finn.
\newblock Direct preference optimization: Your language model is secretly a reward model.
\newblock {\em arXiv preprint arXiv:2305.18290}, 2023.

\bibitem{vanschoren2014openml}
Joaquin Vanschoren, Jan~N Van~Rijn, Bernd Bischl, and Luis Torgo.
\newblock Openml: networked science in machine learning.
\newblock {\em ACM SIGKDD Explorations Newsletter}, 15(2):49--60, 2014.

\bibitem{li2023diffprep}
Peng Li, Zhiyi Chen, Xu~Chu, and Kexin Rong.
\newblock Diffprep: Differentiable data preprocessing pipeline search for learning over tabular data.
\newblock {\em Proceedings of the ACM on Management of Data}, 1(2):1--26, 2023.

\bibitem{li2021cleanml}
Peng Li, Xi~Rao, Jennifer Blase, Yue Zhang, Xu~Chu, and Ce~Zhang.
\newblock Cleanml: A study for evaluating the impact of data cleaning on ml classification tasks.
\newblock In {\em 2021 IEEE 37th International Conference on Data Engineering (ICDE)}, pages 13--24. IEEE, 2021.

\bibitem{ledell2020h2o}
Erin LeDell and Sebastien Poirier.
\newblock H2o automl: Scalable automatic machine learning.
\newblock In {\em Proceedings of the AutoML Workshop at ICML}, volume 2020, page~24, 2020.

\bibitem{yang2025qwen3}
An~Yang, Anfeng Li, Baosong Yang, Beichen Zhang, Binyuan Hui, Bo~Zheng, Bowen Yu, Chang Gao, Chengen Huang, Chenxu Lv, et~al.
\newblock Qwen3 technical report.
\newblock {\em arXiv preprint arXiv:2505.09388}, 2025.

\bibitem{zheng2024llamafactory}
Yaowei Zheng, Richong Zhang, Junhao Zhang, Yanhan Ye, Zheyan Luo, Zhangchi Feng, and Yongqiang Ma.
\newblock Llamafactory: Unified efficient fine-tuning of 100+ language models.
\newblock {\em arXiv preprint arXiv:2403.13372}, 2024.

\end{thebibliography}

\appendix
\section{Appendix A}


\subsection{A.1 Algorithm Pseudocode}

Algorithm \ref{alg:SoftPipe-simple} outlines the core logic of our SoftPipe framework. The algorithm operates over a series of episodes, where each episode is dedicated to constructing and evaluating a single data preparation pipeline (Line 4). 
Within each episode, the agent sequentially builds a pipeline of a predefined length $T$ (Line 8). The cornerstone of our approach lies in the synergistic action selection process at each step. Instead of relying on a single heuristic, SoftPipe computes a unified policy distribution by integrating three distinct signals as formulated in Equation (5): the agent's experience-driven Q-value, the LTR model's immediate quality score, and the LLM's strategic prior (Line 10). An action $a_t$ is then \textit{sampled} from this resulting policy (Line 11), embodying the ``soft'' hierarchical paradigm that allows for flexible, yet guided, exploration.
Once a full pipeline $\pi_{\text{ep}}$ is constructed, it is evaluated on a validation set to obtain a single terminal reward $R$ that reflects its overall quality (Line 20). This holistic reward is then used to update the Q-values for \textit{all} state-action pairs that constituted the pipeline's trajectory (Lines 25-27). This Monte Carlo-style update mechanism enables the agent to learn the long-term consequences of its decisions by associating a sequence of actions with its ultimate outcome. Throughout the training process, the algorithm tracks and retains the best-performing pipeline discovered (Lines 21-23), which is returned as the final output (Line 30).

\begin{algorithm}
\caption{SoftPipe}
\label{alg:SoftPipe-simple}
\begin{algorithmic}[1]
\STATE \textbf{Input:} Raw dataset $D_{\text{raw}}$, max pipeline length $T$, number of episodes $N_{\text{episodes}}$
\STATE \textbf{Initialize:} Q-function $Q$, Pre-trained LTR model, LLM Planner
\STATE \textbf{Initialize:} Best found pipeline $\pi^* \leftarrow \emptyset$, best performance $P^* \leftarrow -\infty$

\FOR{episode $= 1$ \TO $N_{\text{episodes}}$}
    \STATE $s \leftarrow \text{GetState}(D_{\text{raw}})$ \COMMENT{Extract meta-features from data}
    \STATE Initialize current pipeline $\pi_{\text{ep}} \leftarrow \emptyset$
    \STATE Store trajectory for this episode $\text{traj} \leftarrow []$
    
    \FOR{$t = 0$ \TO $T-1$}
        \STATE \COMMENT{--- Synergistic Action Selection ---}
        \STATE Compute policy $\pi_{\text{policy}}(\cdot|s)$ using Eq. (5) (integrating $Q$, LTR, and LLM)
        \STATE Sample action $a_t \sim \pi_{\text{policy}}(\cdot|s)$
        
        \STATE \COMMENT{--- Execute and Record ---}
        \STATE $s_{\text{next}} \leftarrow \text{ApplyAction}(s, a_t)$ \COMMENT{Get next state after applying action}
        \STATE $\pi_{\text{ep}} \leftarrow \pi_{\text{ep}} \oplus a_t$ \COMMENT{Append action to pipeline}
        \STATE $\text{traj.append}((s, a_t))$ \COMMENT{Store state-action pair}
        \STATE $s \leftarrow s_{\text{next}}$
    \ENDFOR
    
    \STATE \COMMENT{--- Evaluate and Learn from the Completed Pipeline ---}
    \STATE Evaluate final pipeline performance: $R \leftarrow \text{EvaluatePipeline}(\pi_{\text{ep}})$
    
    \IF{$R > P^*$}
        \STATE $P^* \leftarrow R$
        \STATE $\pi^* \leftarrow \pi_{\text{ep}}$
    \ENDIF
    
    \STATE \COMMENT{Update Q-values for the episode's trajectory based on the final reward}
    \FOR{each state-action pair $(s_k, a_k)$ in $\text{traj}$}
        \STATE Update $Q(s_k, a_k)$ to better predict the final reward $R$.
        \STATE (e.g., using Monte Carlo update: $Q(s_k, a_k) \leftarrow Q(s_k, a_k) + \eta(R - Q(s_k, a_k)))$
    \ENDFOR
\ENDFOR

\STATE \textbf{Return:} Best found pipeline $\pi^*$
\end{algorithmic}
\end{algorithm}

\subsection{A.2 LLM Prompt Design}
\label{app:prompt_design}
\subsubsection{A Concrete Prompt Example}
Here is a concrete example of a filled-out prompt for the \texttt{wall-robot-nav} dataset at an early stage of pipeline construction. This demonstrates how the abstract template is instantiated with real data and retrieved experiences.

\begin{tcolorbox}[
    colback=promptbackground,      %
    colframe=promptborder,         %
    coltitle=black,                %
    fonttitle=\bfseries,           %
    title=LLM Prompt Example: \texttt{wall-robot-nav} Dataset, %
    arc=2mm,                       %
    boxrule=1pt,                   %
    left=6mm,
    right=6mm,
    top=3mm,
    bottom=3mm,
    listing only,                  %
    listing options={
        language=bash,             %
        basicstyle=\ttfamily\small,%
        keywordstyle=\color{blue},
        commentstyle=\color{gray},
        stringstyle=\color{red!60!black}
    }
]

You are an expert data scientist specializing in automated machine learning. Your task is to analyze the provided dataset context and historical information to propose a probability distribution over available action categories. Each pipeline should be a sequence of data processing operators that aims to discover better data preprocessing workflows more efficiently maximize the prediction accuracy of a downstream classification model.

The available categories are: Imputer (category), Imputer (numeric), Encoder, Feature processor, Feature engineering, Feature selection. 

You must provide a dict-like object to indicate the probability of choosing the action category. Prioritize action categories that seem promising for exploration based on your analysis. Assign very low probabilities to action categories that are clearly irrelevant to the dataset.



\textbf{Current situation and context}:

- Task Type: Logistic regression classification

- Key Dataset Statistics:
  
\quad- Size of dataset: 5456 rows, 24 cols
  
\quad- Missing Values: No
  
\quad- Feature Types: All numerical

\quad- Cols with skewed distribution: V1, V3, V4

\quad- Cols with outliers: V0 (10.31\%), V2 (8.50\%), V7 (2.32\%)

- Current Partial Pipeline: [PowerTransformer, RobustScaler, ...]

\textbf{Available operators}: ImputerMean, StandardScaler, QuantileTransformer, PowerTransformer, PCA, 
MinMaxScaler, VarianceThreshold, ...

---

\textbf{[Example 1]}

-  \textit{Context}: A dataset with no missing values but high skewed distribution columns (B3, B9, ...), oulier columns (B1, B2, B5, ...)

-  \textit{Pipeline}: [QuantileTransformer, RandomTrees-Embedding, MinMaxScaler], accuracy 0.92


\textbf{[Example 2]}

- \textit{Context}: A dataset with many correlated numerical features (A5 and A8: correlation 92\%, A5 and A9: correlation 88\%, ...)

-  \textit{Pipeline}: [StandardScaler, PCA], accuracy 0.88


\end{tcolorbox}

\subsection{A.3 Hyperparameter Settings}

\begin{table*}[h!]
\centering
\caption{Hyperparameter settings for the SoftPipe framework.}
\label{tab:hyperparams}
\begin{tabular}{@{}lll@{}}
\toprule
\textbf{Parameter} & \textbf{Value} & \textbf{Description} \\ \midrule
\multicolumn{3}{l}{\textit{SoftPipe Synergistic Policy Parameters (from Eq. 5)}} \\
$\alpha$ (RL weight) & 1.0 & Weight for the experience-based Q-value. \\
$\beta$ (LLM prior weight) & 2.0 & Controls the strength of the LLM's strategic guidance. \\
$\gamma$ (LTR score weight) & 2.0 & Weight for the immediate quality score from the LTR model. \\
\midrule
\multicolumn{3}{l}{\textit{Reinforcement Learning Parameters}} \\
$\eta$ (Learning rate) & 1e-4 & Learning rate for the Adam optimizer of the Q-network. \\
$\gamma_{\text{RL}}$ (Discount factor) & 0.9 & Discount factor for future rewards in Q-learning. \\
Replay buffer size & 5,000 & Maximum number of transitions stored. \\
Batch size & 200 & Number of transitions sampled for each training step. \\
Q-Network Arch. & 3-layer MLP & (256, 128, 64) neurons with ReLU activation. \\
\midrule
\multicolumn{3}{l}{\textit{General Training Parameters}} \\
$N_{\text{episodes}}$ (Episodes) & 100 & Total number of pipelines generated per dataset. \\
$T$ (Pipeline length) & 8 & Fixed length of each data preparation pipeline. \\ \bottomrule
\end{tabular}
\end{table*}


\subsection{A.4 Hyperparameter Analysis}

We investigate the impact of three key hyperparameters: $\alpha$, $\beta$, and $\gamma$, which weight the action exploration in the training stage. Each parameter is varied independently, where $\alpha$ in the range $[1, 5]$ with a step of 1, $\beta$ and $\gamma$ in the range $[0.1, 4.1]$ with a step of 0.2, while the others are fixed at a certain value. The results are collected from the average test accuracy in datasets ada\_prior, uscensus, and wall-robot-nav.

As shown in Figure~\ref{fig:hyp_analysis}, $\alpha$ shows a stable trend, while $\beta$ and $\gamma$ reveal multiple local maxima. These results indicate that $\beta$ and $\gamma$ require careful tuning, and the presence of multiple peaks may stem from interactions between logit components.

\begin{figure*}
    \centering
    \includegraphics[width=\linewidth]{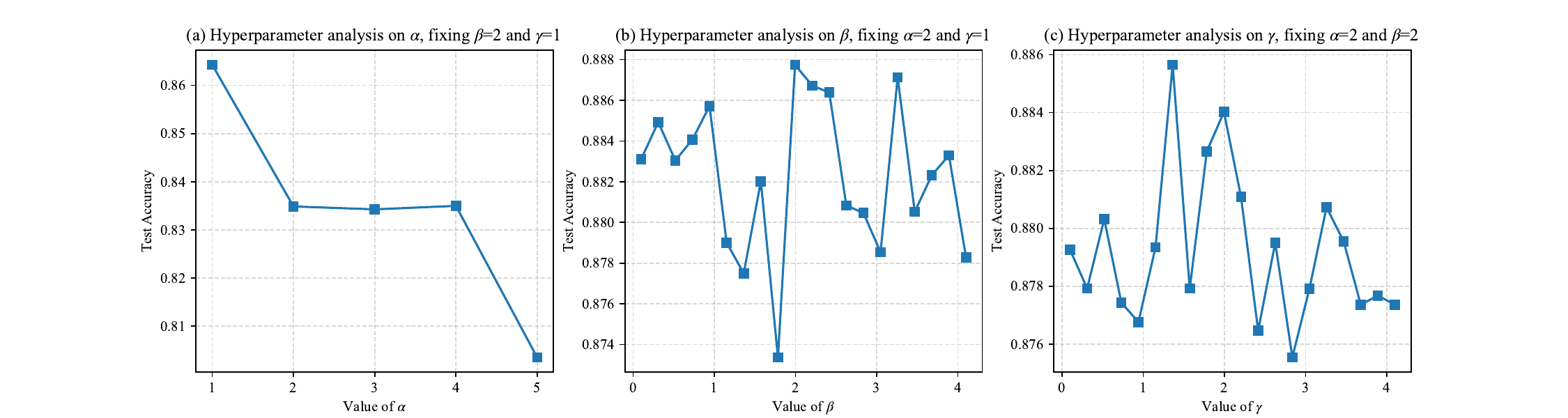}
    \caption{Hyperparameter analysis on the weight of Synergistic Policy}
    \label{fig:hyp_analysis}
\end{figure*}

\end{document}